\begin{document}
%
\title{Computing Bounds on Network Capacity Regions as a Polytope Reconstruction Problem}

\author{\IEEEauthorblockN{Anthony Kim}
\IEEEauthorblockA{Oracle Corporation\\
500 Oracle Parkway, Redwood Shores, CA 94065\\
Email: tonyekim@yahoo.com}
\and
\IEEEauthorblockN{Muriel M\'{e}dard}
\IEEEauthorblockA{Research Laboratory of Electronics\\
MIT, Cambridge, MA 02139\\
Email: medard@mit.edu}}


%


\newtheorem{theorem}{Theorem}
\newtheorem{corollary}[theorem]{Corollary}
\newtheorem{lemma}[theorem]{Lemma}
\newtheorem{observation}[theorem]{Observation}
\newtheorem{proposition}[theorem]{Proposition}
\newtheorem{definition}[theorem]{Definition}
\newtheorem{claim}[theorem]{Claim}
\newtheorem{fact}[theorem]{Fact}
\newtheorem{assumption}[theorem]{Assumption}
\newtheorem{remark}[theorem]{Remark}
\newtheorem{example}[theorem]{Example}
\newtheorem{conjecture}[theorem]{Conjecture}
\numberwithin{equation}{section}   

\newcommand{\CC}{\mathbb C} 
\newcommand{\RR}{\mathbb R}
\newcommand{\ZZ}{\mathbb Z}
\newcommand{\QQ}{\mathbb Q}
\newcommand{\NN}{\mathbb N}
\newcommand{\FF}{\mathbb F}
\newcommand{\GGG}{\mathcal G}
\newcommand{\XXX}{\mathcal X}
\newcommand{\BBB}{\mathcal B}
\newcommand{\CCC}{\mathcal C}
\newcommand{\EEE}{\mathcal E}
\newcommand{\FFF}{\mathcal F}
\newcommand{\MMM}{\mathcal M}
\newcommand{\NNN}{\mathcal N}
\newcommand{\III}{\mathcal I}
\newcommand{\AAA}{\mathcal A}
\newcommand{\SSS}{\mathcal S}
\newcommand{\TTT}{\mathcal T}
\newcommand{\PPP}{\mathcal P}
\newcommand{\QQQ}{\mathcal Q}
\newcommand{\OOO}{\mathcal O}
\newcommand{\HHH}{\mathcal H}
\newcommand{\VVV}{\mathcal H}
\newcommand{\LLL}{\mathcal L}
\newcommand{\WWW}{\mathcal W}

\newcommand{\closure}{\operatorname{closure}}
\newcommand{\sign}{\operatorname{sign}}
\newcommand{\ideal}{\operatorname{Ideal}}
\newcommand{\In}{\operatorname{In}}
\newcommand{\Out}{\operatorname{Out}}
\newcommand{\Span}{\operatorname{span}}
\newcommand{\MinCost}{\operatorname{mincost}}
\newcommand{\vect}[1]{\boldsymbol{#1}}
\newcommand{\bd}{\operatorname{bd}}

\maketitle

\begin{abstract}
We define a notion of network capacity region of networks that generalizes the notion of network capacity defined by Cannons et al. and prove its notable properties such as closedness, boundedness and convexity when the finite field is fixed. We show that the network routing capacity region is a computable rational polytope and provide exact algorithms and approximation heuristics for computing the region. We define the semi-network linear coding capacity region, with respect to a fixed finite field, that inner bounds the corresponding network linear coding capacity region, show that it is a computable rational polytope, and provide exact algorithms and approximation heuristics. We show connections between computing these regions and a polytope reconstruction problem and some combinatorial optimization problems, such as the minimum cost directed Steiner tree problem. We provide an example to illustrate our results. The algorithms are not necessarily polynomial-time.
\end{abstract}


%
\IEEEpeerreviewmaketitle
\section{Introduction}
In a seminal work in 2000, Ahlswede et al.~\cite{ahlswede:network} introduced the network coding model to the problem of communicating information in networks. They showed that the extended capabilities of intermediate nodes to code on incoming packets give greater information throughput than in the traditional routing model and that the capacity of any multiple multicast network with a single source node is equal to the minimum of min-cuts between the source and receiver nodes.

A few explicit outer bounds of capacity regions of networks exist; the max-flow/min-cut bounds is one such set of bounds, which were sufficient in the case of single-source multiple multicast networks. Harvey et al.\cite{harvey:capacity} combined information theoretic and graph theoretic techniques to provide a computable outer bound on the network coding capacity regions. Yan et al.\cite{yan:outerbd} gave an explicit outer bound that improves upon the max-flow/min-cut bounds and showed its connection to a minimum cost network coding problem. They used their results to compute the capacity region of a special class of 3-layer networks. Thakor et al.\cite{thakor:capacity} gave a new computable outer bound, based on characterizations of all functional dependencies in networks. Yan et al.\cite{yan:capacity} provided an exact characterization of the capacity region for general multi-source multi-sink networks by bounding the constrained region in the entropy space. However, they noted that explicitly evaluating the obtained capacity regions remains difficult in general. In a related work, Chan and Grant\cite{chan:capacity} showed that even the explicit characterization of capacity region for single-source networks can be difficult since the computation of a capacity region involves the non-polyhedral set of entropy functions and that linear programming bounds do not suffice.

The routing capacity region of networks is better understood via linear programming approaches. Cannons et al.\cite{cannons:routing} defined a notion of network routing capacity that is computable with a linear program and showed that every rational number in $(0,1]$ is the routing capacity of some solvable network. Yazdi et al.\cite{yazdi:capacity1} and \cite{yazdi:capacity2} extended a special case of Farkas Lemma called the ``Japanese Theorem'' to reduce an infinite set of linear constraints to a finite set in terms of minimal Steiner trees and applied the results to obtain the routing capacity region of undirected ring networks. In a subsequent work, Kakhbod and Yazdi\cite{kakhbod:routing} provided complexity results on the description size of the inequalities obtained and applied them to the undirected ring networks. 

In this paper, we study the network capacity region of networks along the lines of work by Cannons et al.~\cite{cannons:routing}. We define the network capacity region of networks analogously to the rate regions in information theory and show its notable properties when the finite field is fixed: closedness, boundedness and convexity. In the case of routing, we prove that the network routing capacity region is a computable rational polytope and provide exact algorithms and approximation heuristics for computing the region. In the case of linear network coding, we define an auxiliary region, called the semi-network linear coding capacity region, which is a computable rational polytope that inner bounds the network linear coding capacity region, and provide exact algorithms and approximation heuristics, with respect to a fixed finite field. The main idea is to reduce computation of the capacity regions to a polytope reconstruction problem and use linear programming techniques on associated combinatorial optimization problems. Our results generalize to directed networks with cycles and undirected networks. This present work partially addresses two problems proposed by Cannons et al.~\cite{cannons:routing}: whether there exists efficient algorithms for computing their notions of network routing capacity and network linear coding capacity. It follows from our work that there exist combinatorial approximation algorithms, not polynomial-time, for computing the network routing capacity and for computing a lower bound of the network linear coding capacity. In addition, we provide an example to illustrate our results.

We note that algorithms and heuristics we present are not polynomial-time schemes. We do not distinguish the vertex and hyperplane descriptions of polytopes, but note that converting one description into another can be computationally expensive.

The paper is organized as follows. In Section II, we give a fractional network coding model. In Section III, we define the network coding capacity region and prove its notable properties. In Section IV and V, we prove notable properties of network coding capacity regions, or inner bounds thereof, in the case of routing and linear coding and discuss algorithmic aspects of the regions. In Section VI, we give an example that illustrates our results. Finally, in Section VII, we discuss further extensions and conclude. Due to the page limit, we omit some details and refer to \cite{kim:thesis} for more detailed exposition. 


\section{Fractional Network Coding Model}
We define a fractional network coding model. Most definitions are adapted from Cannons et al.~\cite{cannons:routing}. We write vectors or points in a multi-dimensional space with a hat as in $\hat{k}$ and let $\hat{k}_i$ (also $\hat{k}(i)$) denote the $i$-th coordinate of the vector. We shall omit the hat when convenient. 

A {\em capacitated network} $\NNN$ is a finite, directed, acyclic multigraph given by a 7-tuple $(\nu, \epsilon, \mu, c, \AAA, S, R)$ where $\nu$ is a node set, $\epsilon$ is an edge set, $\mu$ is a message set, $c:\epsilon \rightarrow \ZZ^+$ is an edge capacity function, $\AAA$ is an alphabet, $S:\nu \rightarrow 2^\mu$ is a source mapping, and $R:\nu \rightarrow 2^\mu$ is a receiver mapping. 

We define fractional edge function, fractional decoding function, and fractional message assignment with respect to a finite field $F$, where $|F| \geq |\AAA|$, a {\em source dimension vector} $\hat{k}$, and an {\em edge dimension} $n$. Let $\NNN = (\nu, \epsilon, \mu, c, \AAA, S, R)$ be a capacitated network and $m_1, \ldots, m_{\lvert \mu \rvert}$ be the messages. Let $\hat{k}=(k_1, \ldots, k_{\lvert \mu \rvert})$ be a vector of positive integers and $n$ be a positive integer. For each edge $e=(x,y)$, a {\em fractional edge function} is a map
$f_e:(F^{k_{i_1}})\times \cdots \times (F^{k_{i_\alpha}}) \times (F^{nc(e_{\alpha+1})})\times \cdots \times (F^{nc(e_{\alpha+\beta})})\rightarrow F^{nc(e)}$. For each node $x\in \nu$ and message $m_j\in R(x)$, a {\em fractional decoding function} is a map $f_{x,m_j}: (F^{k_{i_1}})\times \cdots \times (F^{k_{i_\alpha}}) \times (F^{nc(e_{\alpha+1})})\times \cdots \times (F^{nc(e_{\alpha+\beta})}) \rightarrow F^{k_j}$, where $m_{i_1}, \ldots, m_{i_\alpha}$ are $\alpha$ messages generated by $x$ and $e_{\alpha+1}, \ldots, e_{\alpha+\beta}$ are $\beta$ in-edges of $x$. We denote the collections of fractional edge and fractional decoding functions by $\FFF_e = \{f_e \: :\: e\in \epsilon\}$ and $\FFF_{x,m}=\{f_{x,m} \: :\: x\in \nu, m\in R(x)\}$. Note $k_i$ is the source dimension for message $m_i$.

A {\em fractional message assignment} is a collection of maps $\overline{a} = (a_1, \ldots, a_{\lvert \mu \rvert})$ where $a_i$ is a message assignment for $m_i$, $a_i:m_i \rightarrow F^{k_i}$. A {\em fractional network code} on $\NNN$ is a 5-tuple $(F, \hat{k}, n, \FFF_e, \FFF_d)$ where $F$ is a finite field, with $|F| \geq |\AAA|$, $\hat{k}$ is a source dimension vector, $n$ is an edge dimension, $\FFF_e$ is a collection of fractional edge functions, and $\FFF_d$ is a collection of fractional decoding functions. We have different classes of fractional network codes corresponding to $\FFF_e$ and $\FFF_d$ being routing, linear, and nonlinear functions over the field $F$: {\em fractional routing/linear/nonlinear network codes}. A fractional network code is a {\em fractional network code solution} if for every fractional message assignment $\overline{a}$, we have $f_{x,m_j}(a_{i_1}(m_{i_1}),\ldots, a_{i_\alpha}(m_{i_\alpha}), s(e_{\alpha+1}), \ldots, s(e_{\alpha+\beta}))$$ = a_j(m_j)$, for all $x\in \nu$ and $m_j\in R(x)$, where $m_{i_1}, \ldots, m_{i_\alpha}$ are $\alpha$ messages generated by $x$ and $e_{\alpha+1}, \ldots, e_{\alpha+\beta}$ are $\beta$ in-edges of $x$. If the above equation holds for a particular $x\in \nu$ and message $m\in R(x)$, then we say {\em node $x$'s demand $m$ is satisfied}. We have classes of fractional network code solutions: {\em fractional routing/linear/nonlinear network code solutions}. 

If $(F, \hat{k}, n, \FFF_e, \FFF_d)$ is a fractional network code solution for $\NNN$, source node $x\in \nu$ sends a vector of $k_i$ symbols from $F$, representing alphabets in $\AAA$, for each message $m_i \in S(x)$; each receiver node $x \in \nu$ demands the original vector of $k_i$ symbols corresponding to $m_i$ for each $m_i\in R(x)$; and each edge $e$ carries a vector of $c(e)n$ symbols. We refer to coordinates of the symbol vector corresponding to $m_i$ as {\em message $m_i$'s coordinates} and coordinates of the symbol vector on edge $e$ as {\em edge $e$'s coordinates}. Note that a coordinate of edge $e$ can be {\em active} or {\em inactive}, depending on whether it actively carries a symbol in the fractional network code solution or not. A fractional network code solution is {\em minimal} if the set $A$ of all active coordinates of edges is minimal, i.e., there exists no fractional network code solution with the same $F$, $\hat{k}$ and $n$ and the set of active coordinates that is a strict subset of $A$. 

In this paper, we consider non-degenerate networks; for each demand of a message at a node, there is a path from a source node generating the message to the receiver node and no message is both generated and demanded by the same node. We shall abridge network code descriptions and solution concepts when convenient. In this work, a multicast network has exactly one message produced by a source node and demanded by more than one receiver nodes and multi-source multi-sink networks are the most general network instances.

\section{Network Capacity Regions}


A vector of nonnegative numbers $\left(\frac{k_1}{n}, \ldots, \frac{k_{\lvert \mu \rvert}}{n}\right) \in \QQ_+^{\lvert \mu \rvert}$ is an {\em achievable coding rate vector} for $\NNN$ if there exists a fractional network code solution $(F, \hat{k}, n, \FFF_e, \FFF_d)$ where $\hat{k} = (k_1, \ldots, k_{\lvert \mu \rvert})$. The {\em network capacity region} of $\NNN$ is the closure of all achievable coding rate vectors in $\RR^{\lvert \mu \rvert}$. There are different classes of network capacity regions: the {\em network routing capacity region, $\CCC_r$,} which is the closure of all achievable routing rate vectors; the {\em network linear coding capacity region, $\CCC_l$,} which is the closure of all achievable linear coding rate vectors; and the {\em network coding capacity region, $\CCC$,} which is the closure of all achievable coding rate vectors.
\begin{theorem}
Let $\NNN$ be a capacitated network and $F$ a fixed finite field. The corresponding network capacity region $\CCC$ is a closed, bounded and convex set in $\RR_+^{|\mu|}$.
\end{theorem}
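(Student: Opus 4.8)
The plan is to treat the three properties in increasing order of difficulty. Closedness is immediate from the definition: $\CCC$ is \emph{defined} as the closure of the set of achievable coding rate vectors, and the closure of any subset of $\RR^{|\mu|}$ is closed. Since every achievable rate vector has nonnegative rational coordinates, the closure lies in $\RR_+^{|\mu|}$.

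For boundedness I would prove a cut-set bound on each coordinate that is independent of the edge dimension $n$. Fix a message $m_i$ and, invoking non-degeneracy, a receiver $x$ demanding $m_i$ but not generating it. In any fractional network code solution the decoding map $f_{x,m_i}$ recovers the value $a_i(m_i)$ from $x$'s locally generated symbols together with the symbols carried on $x$'s in-edges. Holding every other message's assignment fixed and letting $a_i(m_i)$ range over all of $F^{k_i}$, correctness forces $f_{x,m_i}$ to output this value; since $x$'s locally generated symbols are fixed, the tuple of in-edge symbols must then determine $a_i(m_i)$, so the map from $a_i(m_i)$ to the in-edge symbols is injective. As the in-edges of $x$ carry a total of $n\sum_{e\in\In(x)}c(e)$ symbols over $F$, counting gives $|F|^{k_i}\le |F|^{\,n\sum_{e\in\In(x)}c(e)}$, hence $k_i/n \le \sum_{e\in\In(x)}c(e)$. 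This upper bound is a constant independent of $\hat{k}$ and $n$, so every coordinate of every achievable rate vector lies in a fixed bounded interval, and passing to the closure preserves the bound.

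The substance of the proof is convexity, which I would obtain by a time-sharing (block-concatenation) construction, crucially exploiting that $F$ is fixed so that the codes being combined live over a common alphabet. Suppose $\hat{r}=\hat{k}/n$ and $\hat{r}'=\hat{k}'/n'$ are achievable via solutions of edge dimensions $n$ and $n'$. First I would record a scaling closure property: running any solution in $t$ independent parallel blocks yields a solution with source dimension vector $t\hat{k}$ and edge dimension $tn$, realizing the same rate vector; this rescales both solutions to the common edge dimension $nn'$ (source dimensions $n'\hat{k}$ and $n\hat{k}'$ respectively). Then, to realize $\lambda\hat{r}+(1-\lambda)\hat{r}'$ for rational $\lambda=a/(a+b)$ with $a,b$ positive integers, I would concatenate $a$ blocks of the first rescaled solution with $b$ blocks of the second along each edge. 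Because the network is a DAG and the blocks are processed independently, the concatenated per-edge and per-node maps form a valid fractional network code solution over $F$; its edge dimension is $(a+b)nn'$ and message $m_i$'s source dimension is $an'k_i+bnk_i'$, so its rate is exactly $\tfrac{a}{a+b}(k_i/n)+\tfrac{b}{a+b}(k_i'/n')$. Thus every rational convex combination of two achievable points is again achievable.

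Finally I would pass to the closure: given $p,q\in\CCC$ and $\lambda\in[0,1]$, choose achievable $p_j\to p$, $q_j\to q$ and rationals $\lambda_j\to\lambda$, so that each $\lambda_j p_j+(1-\lambda_j)q_j$ is achievable by the construction above and its limit $\lambda p+(1-\lambda)q$ lies in $\CCC$. The main obstacle is the convexity step, and specifically verifying that block-concatenation genuinely produces a fractional network code solution — that decoding succeeds block-by-block and that the bookkeeping of edge and source dimensions yields precisely the claimed rational combination; the cut-set and closure arguments are comparatively routine.
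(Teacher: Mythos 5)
Your proposal is correct and follows essentially the same approach as the paper: closedness by definition, boundedness via a cut-set counting bound on each coordinate, and convexity by rate-sharing (block concatenation over the fixed field $F$) followed by a limiting argument to pass to the closure. The only cosmetic difference is that you bound $k_i/n$ by a receiver's in-edge capacity via injectivity of the decoding map, whereas the paper bounds it by the out-edge capacity of the sources generating $m_i$; both are the same cut-set idea, and your write-up simply supplies the details the paper labels ``straightforward.''
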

\begin{proof}
(Closedness) By definition, $\CCC$ is a closed set. (Boundedness) By symmetry, it suffices to show that $\frac{k_1}{n}$ is bounded in any achievable coding rate vector. Let $n$ be the edge dimension, $\nu_1$ be the set of source nodes in $\nu$ that generate message $m_1$ and $\gamma$ be the sum of capacities of out-edges of nodes in $\nu_1$. Then, $k_1 \leq \gamma n$ as we cannot send more than $\gamma n$ independent coordinates of message $m_1$ and expect receivers to recover all the information.  Hence, $\frac{k_1}{n} \leq \gamma$. (Convexity) Let $x_0, x_1 \in \CCC$ and $\lambda \in [0,1]$. It is straightforward to show that $x= (1-\lambda)x_0 + \lambda x_1 \in \CCC$ by rate sharing; we use sequences of achievable coding rate vectors converging to $x_0$ and $x_1$ to produce a sequence of achievable coding rate vectors converging to $x$.
\end{proof}

Similarly, the routing capacity region $\CCC_r$ and linear coding capacity region $\CCC_l$ are closed, bounded and convex regions when the finite field $F$ is fixed. However, the rate-sharing argument seems to break down in the case of linear coding if the finite field is not fixed. 


\section{Network Routing Capacity Regions}\label{sec:routing}
We show the network routing capacity region $\CCC_r$ is the image of a higher-dimensional rational polytope under an affine map and consider the computation of $\CCC_r$ as the polytope reconstruction problem with a ray oracle. Since multi-source multi-sink networks can be reduced to multiple multicast networks, it suffices to show the results with respect to the multiple multicast networks. 

\subsection{Properties}

\begin{theorem}\label{thm:routingregion}
The network routing capacity region $\CCC_r$ is a bounded rational polytope in $\RR_+^{|\mu|}$ and is computable.
\end{theorem}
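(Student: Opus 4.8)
The plan is to characterize a single fractional routing solution combinatorially and then show that the set of achievable routing rate vectors is the projection of a rational polytope. For fractional routing, each edge coordinate simply forwards one symbol from some message's coordinate, so a routing solution amounts to choosing integer flows along paths. The key observation is that, for a multiple multicast network, satisfying the demand for message $m_i$ means routing $k_i$ copies of each of message $m_i$'s coordinates from its source to every receiver that demands $m_i$; this is exactly a fractional Steiner tree packing. First I would set up, for each message $m_i$, the (finite) collection of Steiner trees $\TTT_i$ connecting the source of $m_i$ to all receivers demanding $m_i$, and introduce a nonnegative rational variable $x_{i,T}$ for each tree $T \in \TTT_i$ measuring how much of the edge budget is allocated to routing $m_i$ along $T$.

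The second step is to write the edge capacity constraints. The total flow through edge $e$ is the sum over all messages and all trees using $e$ of the corresponding $x_{i,T}$, and this must not exceed $c(e)n$; normalizing by $n$ gives constraints $\sum_{i}\sum_{T\ni e} x_{i,T} \le c(e)$ with the identification that the rate for message $m_i$ is $\frac{k_i}{n} = \sum_{T\in\TTT_i} x_{i,T}$. These constraints, together with $x_{i,T}\ge 0$, define a rational polytope $P$ in the space indexed by the (finitely many) trees, since all coefficients are rational (indeed integer capacities and $0/1$ incidence coefficients). The affine map $\pi$ sending the tree-allocation vector $x$ to the rate vector $\left(\sum_{T\in\TTT_1} x_{1,T}, \ldots, \sum_{T\in\TTT_{|\mu|}} x_{|\mu|,T}\right)$ then has image $\pi(P)$.

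The third step is to identify $\pi(P)$ with $\CCC_r$. The forward direction is that every point of $\pi(P)$ is achievable: a rational point of $P$ clears denominators to an integral Steiner-tree packing realizing an honest fractional routing solution for suitable $\hat{k}, n$, and by density the image covers $\pi(P)$. Conversely, any achievable routing rate vector decomposes its edge-flows into flows along source-to-receiver paths, which assemble into Steiner trees, yielding a feasible point of $P$; taking closures matches $\CCC_r$ exactly because $\pi(P)$ is already closed. Since the projection of a rational polytope under a rational affine map is again a bounded rational polytope, and boundedness follows from the capacity bound already established in the previous theorem, this gives the result. Computability follows because $\TTT_i$ is a finite, explicitly enumerable set, so $P$ and the map $\pi$ are finitely describable and $\pi(P)$ can be computed by Fourier--Motzkin elimination (not necessarily in polynomial time).

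The main obstacle I anticipate is the rigorous equivalence in the third step, specifically the flow-decomposition direction: showing that an arbitrary minimal fractional routing solution can be reorganized into a nonnegative combination of Steiner trees (rather than arbitrary subgraphs or paths that fail to connect all receivers simultaneously). Care is needed because a single active edge coordinate may serve the demands of several receivers at once through shared forwarding, so the decomposition into trees must correctly account for this sharing without double-counting edge capacity; this is where the multicast structure and the \emph{minimality} of the solution are essential.
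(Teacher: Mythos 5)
Your proposal is correct and follows essentially the same route as the paper: both reduce fractional routing solutions to Steiner tree packings, define the same rational polytope of tree-allocation variables (with edge-capacity constraints normalized by $n$), and realize $\CCC_r$ as its image under the same affine rate map. The only cosmetic difference is in the computability step, where you use Fourier--Motzkin elimination to project the polytope while the paper enumerates the vertices of $\PPP_r$ and takes the convex hull of their images; and the obstacle you flag (decomposing a minimal routing solution into Steiner trees without double-counting shared edges) is precisely the step the paper asserts with only a brief justification.
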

\begin{proof}
(Polytope) It suffices to consider minimal fractional routing solutions, which consist of routing messages along Steiner trees, since any fractional routing solution can be reduced to a minimal one. Let $\TTT_i$ be the finite set of all Steiner trees rooted at the source node of message $m_i$ and spanning all receiver nodes that demand $m_i$, and $\TTT$ be the union, $\TTT = \TTT_1 \cup \ldots \cup \TTT_{|\mu|}$. Then, any minimal fractional routing solution satisfies the following constraints:
\[
\begin{array}{llll}
\sum_{T\in \TTT} T(e) \cdot x(T) & \leq & c(e)n, & \forall e\in \epsilon \\
\sum_{T\in \TTT_i} x(T)& = & k_i, & \forall 1\leq i \leq |\mu | \\
x & \geq & 0,
\end{array}
\]
where $x(T)$ is the number of times Steiner tree $T$ is used in the solution and $T(e)$ is an indicator that is 1 if $T$ uses edge $e$, or 0 otherwise. After scaling $x(T)$ by $n$, any minimal fractional routing solution satisfies
\[
\begin{array}{llll}
\sum_{T\in \TTT} T(e) \cdot x(T) & \leq & c(e), & \forall e\in \epsilon \\
x & \geq & 0. &
\end{array}
\]

As the coefficients are in $\QQ$, the above set of inequalities defines a bounded rational polytope $\PPP_r$, with rational extreme points. The polytope is bounded, because edge capacities are finite and no Steiner tree can be used to route for infinitely many times. Each minimal fractional routing solution reduces to a rational point inside the polytope $\PPP_r$, and each rational point $x$ inside $\PPP_r$ has a minimal fractional routing solution $(F, \hat{k}, n, \FFF_e, \FFF_d)$ that reduces to it, such that $\frac{k_i}{n} = \sum_{T\in\TTT_i}x(T)$ for all $i$. It follows that the network routing capacity region $\CCC_r$ is the image of $\PPP_r$ under the affine map $\psi_r: (x(T))_{T\in \TTT} \mapsto \left(\sum_{T\in\TTT_1} x(T), \ldots,\sum_{T\in\TTT_{|\mu|}}x(T)\right)$. It follows that $\CCC_r$ is a bounded rational polytope. 

(Computability) We first compute the vertices $v_1, \ldots, v_h$ of polytope $\PPP_r$ by any vertex enumeration algorithm. Then we compute the images of the vertices of $\PPP_r$ under the affine map $\psi_r$. The network routing capacity region is given by the vertices of the convex hull of points $\psi_r(v_1), \ldots, \psi_r(v_h)$ in $\RR_+^{|\mu|}$.
\end{proof}

\subsection{Algorithms}
We now provide exact algorithms and approximation heuristics for computing the network routing capacity region $\CCC_r$. We already provided an exact algorithm in the proof of Theorem~\ref{thm:routingregion}. Since the polytope $\PPP_r$ is defined in a high-dimensional space, the exact algorithm may not be efficient in practice. Using results in \cite{cole} and \cite{gritzmann} on polytope reconstruction problems, we derive different kinds of algorithms that might be more efficient. 

In our polytope reconstruction problem, we compute the facet description of a given polytope by making calls to the ray oracle $\OOO_{Ray}$ which, given a ray, returns the intersection point on the polytope surface and the ray. We reduce the computation of $\CCC_r$ to the polytope reconstruction problem by 1) reflecting $\CCC_r$ around the origin to get a symmetric polytope $\QQQ$ that contains the origin in its interior and 2) solving the linear programs similar to the one in Cannons et al.~\cite{cannons:routing} to implement the ray oracle $\OOO_{Ray}$. To reflect $\CCC_r$, we map all calls to the ray oracle to equivalent calls with rays defined in $\RR_+^{|\mu|}$. We use the algorithm outlined in Section 5 of Gritzmann et al.~\cite{gritzmann} to compute all the facets of the resulting polytope $\QQQ$ and, therefore, $\CCC_r$. The main idea of the algorithm is to first find a polytope $\QQQ'$ that contains $\QQQ$ and whose facet-defining hyperplanes are a subset of those for $\QQQ$ (Theorem 5.3 in \cite{gritzmann}), and then successively add more facet-defining hyperplanes of $\QQQ$ to $\QQQ'$ by using $\OOO_{Ray}$. By Theorem 5.5 in Gritzmann et al.~\cite{gritzmann} and the symmetries around the origin, we need at most $f_0(\CCC_r)+ (|\mu|-1) f_{|\mu|-1}^2(\CCC_r) + (5|\mu| - 4) f_{|\mu|-1}(\CCC_r)$ calls to the ray oracle where $f_i(\CCC_r)$ denotes the number of $i$-dimensional faces of $\CCC_r$ that do not contain the origin (the $0$-th dimensional faces being the points). 

If we use an exact algorithm for the ray oracle $\OOO_{Ray}$, we get an exact hyperplane description of the network routing capacity region. If instead we use an approximation algorithm that computes some point $r$ such that the actual intersection point lies between $r$ and $Ar$, then we obtain an approximation heuristic that computes a set of points $r$ such that the boundary of $\CCC_r$ lies between points $r$ and $Ar$. We note that an approximation algorithm for $\OOO_{Ray}$ does not necessary yield an approximation algorithm for $\CCC_r$, where an $A$-approximation of $\CCC_r$ would be a polytope $\PPP$ such that $\PPP \subset \CCC_r \subset A \PPP$. While an approximation algorithm for the oracle $\OOO_{Ray}$ does not necessarily lead to a polytope description of $\CCC_r$, it might be faster and more efficient than exact algorithms and, hence, more applicable to compute a quick ``sketch'' of $\CCC_r$. For instance, we can find approximate intersection points on a sufficiently large number of rays evenly spread apart.

\subsection{Implementations of Oracle $\OOO_{Ray}$}
Given the hyperplane description of the polytope $\PPP_r$ and a ray with a rational slope, $\hat{x} = \hat{q}t, t\geq 0$, we want to compute the rational intersection point of the ray and the boundary of $\CCC_r$. Note that the intersection point is exactly $\lambda_{max}\hat{q}$ where $\lambda_{max}$ is the optimal value to the linear program
\begin{equation}\label{lp:route}
\begin{array}{lllll}
\max & \lambda & & & \\
\operatorname{s.t.}& \sum_{T\in \TTT} T(e)\cdot x(T) & \leq & c(e), & \forall e\in \epsilon \\
 &\sum_{T\in \TTT_i} x(T) & \geq & \lambda q_i, & \forall i \\
 &x, \lambda & \geq &0. & \\
\end{array}
\end{equation}

We can use any linear programming algorithm, such as the ellipsoid algorithm, to solve the linear program exactly and, thus, obtain an exact oracle $\OOO_{Ray}$. For the approximate ray oracles, we can employ algorithms that solve the linear program within a provable approximation guarantee. Note that as the network routing capacity region $\CCC_r$ is a rational polytope, it suffices to consider rays with a rational slope.  

Using the results for the multicommodity flow and related problems by Garg and K\"{o}nemann~\cite{garg}, we can derive a combinatorial approximation algorithm that solves the linear program~\eqref{lp:route}. It computes a point $\hat{r}$ such that $\lambda_{max}\hat{q}$ is on the line segment between $\hat{r}$ and $(1+\omega)A\hat{r}$ for some numbers $\omega>0$ and $A\geq 1$. The main idea is to view solving \eqref{lp:route} as concurrently packing Steiner trees according to the ratio defined by $\hat{q}$ and use the results for the minimum cost directed Steiner tree problem: given an acyclic directed multigraph $\GGG = (\nu, \epsilon)$, a length function $l:\epsilon \rightarrow \RR^+$, a source node $s$ and receiver nodes $n_1, \ldots, n_k$, find a minimum cost (defined by $\sum_{e\in \epsilon'} l(e)$) subset of edges $\epsilon'$ such that there is a directed path from $s$ to each $n_i$ in $\epsilon'$. We assume we have an oracle $\OOO_{DSteiner}$ that solves the minimum cost directed Steiner tree problem, which is well-known to be NP-hard, within an approximation guarantee $A$. Then we have the following theorem:

\begin{theorem}\label{routing:step2}
There exists an $(1+\omega)A$-approx\-imation algorithm for the linear program~\eqref{lp:route} in time $O(\omega^{-2}$ $ (|\mu|\log A|\mu|$ $+ |\epsilon|)A\log |\epsilon| \cdot T_{DSteiner})$, where $T_{DSteiner}$ is the time required to solve the minimum cost directed Steiner tree problem with oracle $\OOO_{DSteiner}$ within an approximation guarantee $A$.
\end{theorem}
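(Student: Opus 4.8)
The plan is to recognize Theorem~\ref{routing:step2} as an instance of the fractional packing framework of Garg and K\"{o}nemann~\cite{garg}, specialized to packing Steiner trees. Linear program~\eqref{lp:route} asks us to maximize $\lambda$ subject to packing Steiner trees so that message $m_i$ receives at least $\lambda q_i$ units of flow while respecting edge capacities $c(e)$. First I would pass to the LP dual, which is a covering problem: assign lengths $l(e)\geq 0$ to edges so as to minimize $\sum_{e} c(e) l(e)$ subject to, for each commodity ratio constraint, the minimum-length Steiner tree serving message $m_i$ having length at least a prescribed bound (after normalizing by $q_i$). The separation oracle for this dual is exactly the minimum cost directed Steiner tree problem, which is why the oracle $\OOO_{DSteiner}$ enters.

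The core of the argument is the standard Garg--K\"{o}nemann multiplicative-weights routine. I would initialize edge lengths to a small uniform value $l(e)=\delta/c(e)$, then iterate: in each phase, repeatedly find the cheapest Steiner tree $T$ (up to factor $A$, using $\OOO_{DSteiner}$) under the current lengths, route a unit of the appropriate commodity along $T$, and update each edge length on $T$ multiplicatively by the factor $(1+\omega\,T(e)/c(e))$ or a comparable rule. The multiplicative update ensures that edges approaching their capacity become exponentially expensive, so the oracle naturally shifts packing to slack edges, and the total flow routed, after final scaling by $\log_{1+\omega}(1/\delta)$, is feasible and within $(1+\omega)A$ of $\lambda_{max}$. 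The approximation factor $A$ is inherited directly from the oracle: since we only ever find an $A$-approximate cheapest tree, the standard analysis yields $(1+\omega)A$ rather than $(1+\omega)$. This matches the claimed guarantee and produces a point $\hat{r}$ with $\lambda_{max}\hat{q}$ on the segment between $\hat{r}$ and $(1+\omega)A\hat{r}$, exactly as stated.

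For the running time I would count iterations. A standard length-function potential argument shows the algorithm terminates after $O(\omega^{-2}\log_{1+\omega}|\epsilon|)$ phases, and within each phase the number of oracle calls is bounded by the number of commodities together with the number of edges that can saturate, giving the $(|\mu|\log_{A}(|\mu|) + |\epsilon|)$ factor; each call costs $T_{DSteiner}$ and incurs an additional $A$ from the approximate oracle. Multiplying these yields the claimed bound $O(\omega^{-2}(|\mu|\log A|\mu| + |\epsilon|)A\log|\epsilon|\cdot T_{DSteiner})$.

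The main obstacle I expect is the adaptation of the single-commodity or concurrent-multicommodity-flow analysis of~\cite{garg} to the setting where the ``commodities'' are whole Steiner trees packed at a fixed ratio $\hat{q}$ rather than simple source-sink paths. Two points require care: first, verifying that the approximate directed-Steiner-tree oracle plugs into the weight-update invariant without degrading the guarantee beyond the stated $(1+\omega)A$ factor, since $\OOO_{DSteiner}$ only returns a tree within factor $A$ of optimal; and second, handling the concurrent (ratio-coupled) nature of the constraints $\sum_{T\in\TTT_i} x(T)\geq \lambda q_i$, which forces the algorithm to balance progress across all $|\mu|$ commodities simultaneously rather than routing each independently. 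Once these are reconciled with the Garg--K\"{o}nemann potential function, the approximation ratio and the iteration count follow from their framework essentially verbatim.
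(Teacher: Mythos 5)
Your proposal is correct and takes essentially the same approach as the paper: the paper's own (sketched) argument is precisely the Garg--K\"{o}nemann concurrent packing framework applied to Steiner trees packed in the ratio $\hat{q}$, with the $A$-approximate oracle $\OOO_{DSteiner}$ playing the role of the shortest-path subroutine, yielding the $(1+\omega)A$ guarantee and the stated iteration count. The paper defers the detailed potential-function analysis to the thesis~\cite{kim:thesis}, but the two adaptations you flag (the approximate oracle degrading the ratio only by the factor $A$, and the ratio-coupled concurrent constraints) are exactly the points the paper's sketch relies on, handled as in~\cite{garg}.
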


Note that computing the network routing capacity region is not the same as the multicommodity flow, as in our problem, a message can be demanded by multiple nodes and can be duplicated at intermediate nodes. There are algorithms for $\OOO_{DSteiner}$; the approximation using shortest paths yields $A=O(|\nu|)$ in time $O((|\nu|^2 + |\nu||\epsilon|) \log |\nu|)$ and Charikar et al.~\cite{charikar} gives a set of algorithms with $A=i(i-1)|\nu|^{1/i}$ and time $O(|\nu|^{3i})$ for any integer $i>1$. We can also use any brute-force algorithm that yields $A=1$. Note that further improvements in the minimum cost directed Steiner tree problem translates to improvements to our approximation algorithm by above theorem. 

\section{Network Linear Coding Capacity Regions}
We define  a computable polytope $\CCC_l'$, which we call the {\em semi-network linear coding capacity region}, that is contained in the network linear coding capacity region $\CCC_l$. It is unknown how good of an approximation the polytope $\CCC_l'$ is to $\CCC_l$. Unlike in the last section, the finite field is important in the computation of $\CCC_l'$ because of linearity of functions and termination of algorithms. In this section, we assume that a network $\NNN$ is a multi-source multi-sink network and that the finite field $F$ is fixed.

\subsection{Definitions}
The {\em weight vectors associated with $\NNN$} are vectors $w$ in $\{0,1\}^{|\mu|}$ such that there exists a scalar-linear network code solution when only messages $m_i$ with $w_i=1$ are considered and all the edges are assumed to be of unit capacity. We refer to the scalar-linear network code solutions corresponding to these weight vectors as {\em partial scalar-linear network code solutions}. A fractional network code is a {\em simple fractional linear network code solution} if the fractional network code is linear over the finite field $F$ and can be written as an aggregate of partial scalar-linear solutions (when considered with unit edge capacities). We define the {\em semi-network linear coding capacity region} $\CCC'_l$ of $\NNN$ as the closure of all coding rate vectors achievable by simple fractional linear network code solutions. Clearly, the network linear coding capacity region $\CCC_l$ contains the semi-network linear coding capacity region $\CCC'_l$. Note that scalar-linear network code solution and simple fractional linear network code solution are two different solution concepts.


\subsection{Results}
By the same line of reasoning as in the case of routing, we get the following theorems. 

\begin{theorem}
Assume a finite field $F$ is given. The semi-network linear coding capacity region $\CCC'_l$, with respect to $F$, is a bounded rational polytope in $\RR_+^{|\mu|}$ and is computable.
\end{theorem}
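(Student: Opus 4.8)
The plan is to mirror the proof of Theorem~\ref{thm:routingregion} almost verbatim, replacing Steiner trees with partial scalar-linear network code solutions as the combinatorial building blocks. First I would invoke the definition of a simple fractional linear network code solution, which (with unit edge capacities) decomposes as an aggregate of partial scalar-linear solutions indexed by the weight vectors $w\in\{0,1\}^{|\mu|}$ associated with $\NNN$. Because the field $F$ is fixed, the set of weight vectors is finite, and hence so is the set $\WWW$ of all partial scalar-linear network code solutions available for packing. Let $x(P)$ denote the number of copies of partial solution $P$ used. Each such $P$ consumes one unit of capacity on each edge it activates; writing $P(e)=1$ when $P$ uses edge $e$ and $0$ otherwise, any simple fractional linear solution with edge dimension $n$ must satisfy, after scaling by $n$,
\begin{equation}\label{lp:semilinear}
\begin{array}{llll}
\sum_{P\in \WWW} P(e)\cdot x(P) & \leq & c(e), & \forall e\in \epsilon \\
x & \geq & 0. &
\end{array}
\end{equation}

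This system has rational (indeed integer) coefficients and is bounded, since the $c(e)$ are finite and no partial solution can be packed infinitely often against a finite capacity; therefore it defines a bounded rational polytope $\PPP'_l$ with rational extreme points. The rate delivered to message $m_i$ by a packing $x$ is $\sum_{P\in\WWW} w^P_i\, x(P)$, where $w^P$ is the weight vector realized by $P$; this is the number of partial solutions carrying $m_i$, scaled by $n$. Exactly as before, every minimal simple fractional linear solution reduces to a rational point of $\PPP'_l$, and every rational point of $\PPP'_l$ lifts back to such a solution with $\frac{k_i}{n}=\sum_{P\in\WWW} w^P_i\, x(P)$. Hence $\CCC'_l$ is the image of $\PPP'_l$ under the affine map $\psi'_l:(x(P))_{P\in\WWW}\mapsto\bigl(\sum_{P}w^P_1 x(P),\ldots,\sum_{P}w^P_{|\mu|}x(P)\bigr)$, and the image of a bounded rational polytope under an affine map is again a bounded rational polytope. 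Computability then follows verbatim from the routing case: enumerate the vertices $v_1,\ldots,v_h$ of $\PPP'_l$, apply $\psi'_l$, and take the convex hull of $\psi'_l(v_1),\ldots,\psi'_l(v_h)$ in $\RR_+^{|\mu|}$.

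The step I expect to be the genuine obstacle is justifying that the decomposition into partial scalar-linear solutions is both valid and finite, i.e.\ that the building blocks $\WWW$ really are finitely many and that \eqref{lp:semilinear} faithfully captures the achievable region. This is precisely where fixing $F$ is essential and where the argument departs from routing: the number of distinct scalar-linear codes over $F$ on a fixed graph is finite only because $F$ is fixed, so the edge and decoding functions range over a finite set and the catalogue of weight vectors (and the partial solutions realizing them) is finite. I would also need the aggregation/rate-sharing property—that superimposing partial solutions on disjoint coordinate blocks yields a legitimate simple fractional linear solution and that capacity usage adds linearly across blocks—which is exactly the structure built into the definition of a simple fractional linear network code solution and the same additivity used in the boundedness and convexity arguments of the first theorem. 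Once finiteness of $\WWW$ and the additivity of capacity consumption are in hand, the polytope and computability claims are formally identical to the routing case, so no new geometric machinery is required.
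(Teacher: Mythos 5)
Your proposal is correct and matches the paper's intent exactly: the paper gives no separate proof for this theorem, stating only that it follows ``by the same line of reasoning as in the case of routing,'' and your argument is precisely that reasoning spelled out --- packing the finitely many partial scalar-linear solutions (finite because $F$ is fixed) in place of Steiner trees, forming the scaled capacity-constraint polytope, mapping it by the weight-vector affine map, and computing via vertex enumeration plus convex hull. Your closing discussion of why finiteness of the building blocks and additivity of capacity consumption are the essential points is also exactly where the paper's definitions of weight vectors and simple fractional linear solutions do the work.
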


Instead of the minimum cost directed Steiner tree problem, we have two associated problems: minimum cost scalar-linear network code problem where given a network with unit edge capacities, a finite field $F$, and a length function $l:\epsilon\rightarrow \RR_+$, we want to compute the minimum cost (defined to be the sum of lengths of the edges used) scalar-linear network code solution; and fractional covering with box constraints problem which is to solve 
\begin{equation*}
\begin{array}{lllll}
\min & \sum_{j=1}^m c(j)x(j) & & & \\
\operatorname{s.t.}& \sum_j A(i,j) x(j) & \geq & b(i) , & \forall 1 \leq i \leq n \\
 & x(j) & \leq & u(j), & \forall 1\leq j \leq m \\
 &x & \geq &0, & 
\end{array}
\end{equation*}
where $A$ is an $n\times m$ nonnegative integer matrix, $b$ a nonnegative vector, $c$ a positive vector, and $u$ a nonnegative integer vector. Assuming we have two oracles, $\OOO_{SLinear}$ and $T_{FCover}$, for these problems, we get:

\begin{theorem}\label{lcoding:step2}
There exists an $(1+\omega)B$-approx\-imation algorithm for oracle $\OOO_{Ray}$, for $\CCC'_l$, with time $O(\omega^{-2} (\log A|\mu| + |\epsilon|) B \log |\epsilon| \cdot (T_{FCover} + k' T_{SLinear}))$, where $T_{FCover}$ is the time required to solve the fractional covering problem by $\OOO_{FCover}$ within an approximation guarantee $B$, $T_{SLinear}$ is the time required to solve the minimum cost scalar-linear network code problem exactly by $\OOO_{SLinear}$, and $k'$ is the total number of weight vectors associated with $\NNN$.
\end{theorem}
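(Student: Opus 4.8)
The plan is to mirror the structure of Theorem~\ref{routing:step2}, since Theorem~\ref{lcoding:step2} is the linear-coding analogue of the routing result. First I would recall that the oracle $\OOO_{Ray}$ for $\CCC'_l$ amounts to solving a linear program analogous to~\eqref{lp:route}: given a ray direction $\hat{q}$, maximize $\lambda$ subject to the simple fractional linear network code solution packing the weight-vector-indexed ``building blocks'' (the partial scalar-linear solutions) so that each message $m_i$ receives rate at least $\lambda q_i$ while respecting the edge capacity constraints $c(e)$. Since a simple fractional linear solution is by definition an aggregate of partial scalar-linear solutions, each indexed by one of the $k'$ weight vectors associated with $\NNN$, the relevant ``objects'' we pack are these partial solutions rather than Steiner trees. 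The LP therefore has the same concurrent-packing shape as in the routing case, and the Garg--K\"{o}nemann framework~\cite{garg} for fractional packing/covering applies verbatim once we supply the appropriate separation/pricing oracle.

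The key steps, in order, are as follows. First, I would set up the packing LP and its covering dual, observing that the dual is exactly a fractional covering problem with box constraints: the box constraints $x(j) \leq u(j)$ arise because each weight vector's partial solution can be used only boundedly often under finite edge capacities, and the covering inequalities $\sum_j A(i,j)x(j) \geq b(i)$ encode the rate demands. Second, I would invoke the Garg--K\"{o}nemann multiplicative-weights scheme, which iteratively reweights edges and, at each iteration, must find the most violated dual constraint; this pricing step is where the oracle $\OOO_{FCover}$ for the fractional covering problem and the oracle $\OOO_{SLinear}$ for the minimum cost scalar-linear network code problem enter. Specifically, pricing requires, for the current edge lengths $l:\epsilon\to\RR_+$, identifying a cheap partial scalar-linear solution, which is precisely a minimum cost scalar-linear network code; iterating this across all $k'$ weight vectors accounts for the factor $k'$ multiplying $T_{SLinear}$ in the stated running time. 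Third, I would track the approximation guarantees multiplicatively: the $(1+\omega)$ factor comes from the Garg--K\"{o}nemann stopping criterion and the $B$ factor from solving the fractional covering subproblem within guarantee $B$, yielding the claimed $(1+\omega)B$-approximation. Finally, I would assemble the per-iteration cost $O(B\log|\epsilon|\cdot(T_{FCover}+k'T_{SLinear}))$ with the iteration count $O(\omega^{-2}(\log A|\mu| + |\epsilon|))$ to obtain the total running time.

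The main obstacle I anticipate is verifying that the pricing step genuinely decomposes into these two specific oracle calls with the right multiplicities, and that the box constraints are handled correctly within the packing/covering reduction. In the routing case the pricing step was a single minimum cost directed Steiner tree computation, but here the atomic object is a partial scalar-linear solution constrained by a weight vector, so I would need to argue carefully that finding the best such object over all $k'$ weight vectors reduces to $k'$ calls to $\OOO_{SLinear}$ together with the covering oracle $\OOO_{FCover}$ to aggregate them. I expect the book-keeping of the box constraints—ensuring they do not break the Garg--K\"{o}nemann guarantees and that the resulting width/iteration bounds match the claimed $O(\omega^{-2}(\log A|\mu|+|\epsilon|))$ factor—to be the most delicate part, since the finiteness of the field $F$ and of the edge capacities is precisely what makes $k'$ finite and the algorithm terminate. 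Once these pieces are in place, the theorem follows by the same reasoning as Theorem~\ref{routing:step2}, with the Steiner-tree packing replaced by partial-scalar-linear-solution packing.
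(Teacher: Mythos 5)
Your proposal follows essentially the same route as the paper, which establishes this theorem ``by the same line of reasoning as in the case of routing'' (with details deferred to the thesis): a Garg--K\"{o}nemann concurrent-packing scheme in which the objects packed are partial scalar-linear solutions indexed by the $k'$ weight vectors, the pricing step makes $k'$ calls to $\OOO_{SLinear}$ under the current edge lengths and one call to $\OOO_{FCover}$ to aggregate them against the demands $\hat{q}$, giving the per-iteration cost $T_{FCover} + k'T_{SLinear}$ and the $(1+\omega)B$ guarantee. One minor slip worth noting: the fractional covering problem with box constraints is the per-iteration pricing/aggregation subproblem, not literally the LP dual of the packing program (whose dual variables are indexed by edges and messages, not weight vectors), but since your subsequent use of $\OOO_{FCover}$ is exactly the correct one, this mislabeling does not affect the argument.
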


For $\OOO_{FCover}$, we can use any linear programming algorithm or the combinatorial approximation algorithm by Fleischer~\cite{fleischer}. For $\OOO_{SLinear}$, there is a dynamic programming algorithm in \cite{kim:thesis}. Without the minimum cost condition, the scalar-linear network code problem reduces to the decidability problem of determining whether or not a network has a scalar-linear solution, which is NP-hard~\cite{lehman}. Without the fixed finite field $F$, the decidability of the problem is unknown; this justifies in part our assumption of the fixed finite field $F$.

\section{An Example}
For network $\NNN$, we computed the network routing capacity region (two inner curves) and the semi-network linear coding capacity region (two outer curves) with both exact and approximate ray oracles in Figure~\ref{fig:graphic}. Network $\NNN$ has two source nodes at the top and two receiver nodes at the bottom. For exact ray oracles, we hard-coded the corresponding linear programs and used a linear program solver, linprog, in MATLAB. To obtain approximate intersections points, we simply used an approximate oracle $\OOO_{Ray}$ in place of the linear program solver in an implementation of a 2D-polytope reconstruction algorithm. We note that the approximate oracle $\OOO_{Ray}$ worked well in this example and led to its successful termination, but this may not hold for arbitrary networks in general.

\begin{figure}[!t]
\centering
\includegraphics[width=3.4in]{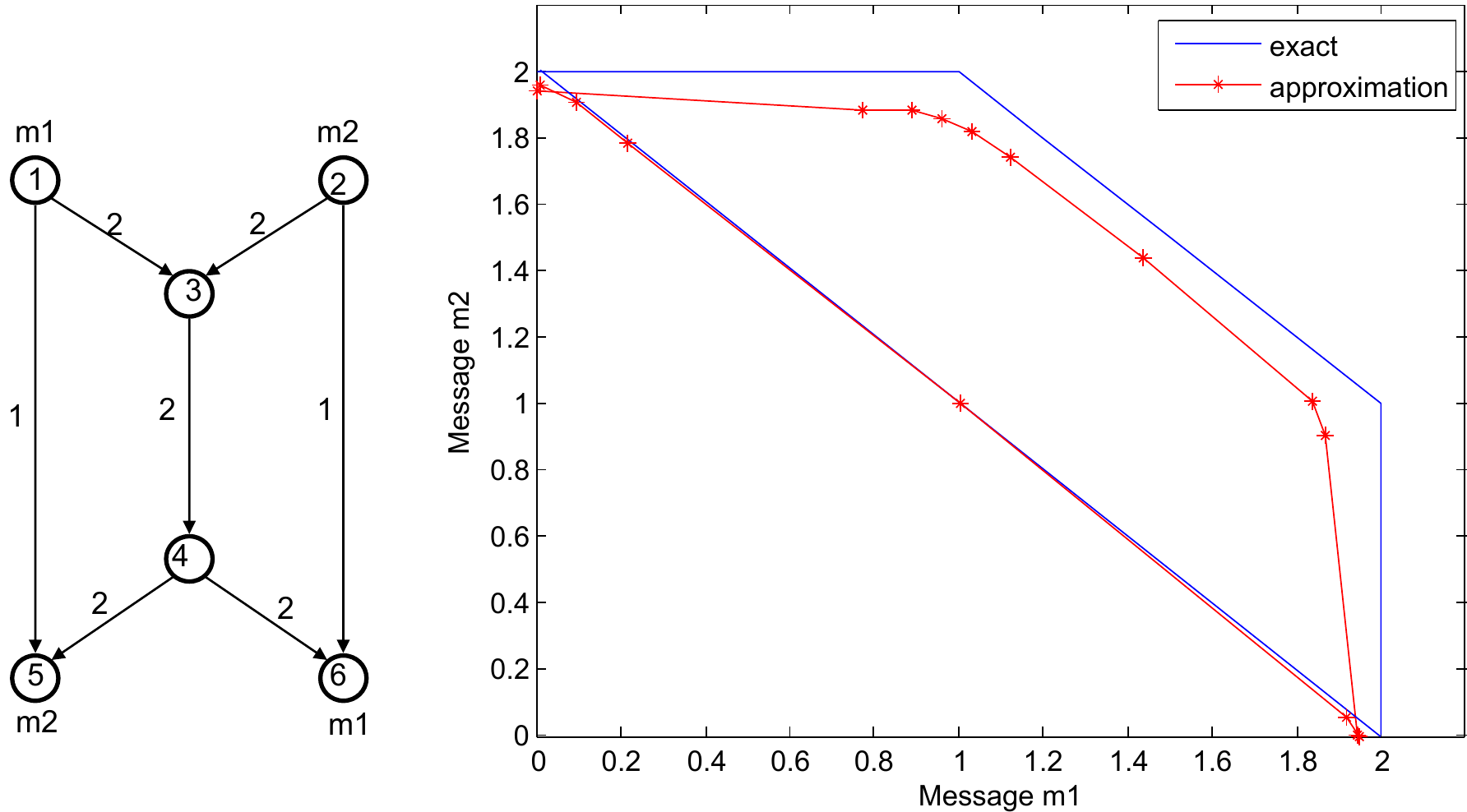}
\caption{Network $\NNN$ and its network routing capacity region and semi-network linear capacity region, each computed with an exact $\OOO_{Ray}$ and an approximate $\OOO_{Ray}$.}
\label{fig:graphic}
\end{figure}

\section{Further Discussion}
Our results have a few extensions: We can design membership testing algorithms that given a rate vector, determines whether or not there exists a fractional network code solution that achieves the rates from algorithms we have provided, for the network routing capacity region and semi-network linear coding capacity region. We can also generalize the results to directed networks with cycles and undirected networks straightforwardly. 

Note that the network routing capacity defined by Cannons et al.\cite{cannons:routing} corresponds to a point on the boundary of polytope $\CCC_r$; it is exactly the intersection point between the (outer) boundary of $\CCC_r$ and the ray $\hat{x} = (1, \ldots, 1) t, t\geq 0$. Hence, our work partially addresses two problems proposed by Cannons et al.: whether there exists efficient algorithms for computing their notions of network routing capacity and network linear coding capacity. It follows from our work that there exist combinatorial approximation algorithms, albeit not polynomial-time, for computing the network routing capacity and for computing a lower bound of the network linear coding capacity.

For details omitted in this paper, we refer to \cite{kim:thesis}.


\section*{Acknowledgment}
The authors would like to thank Michel Goemans and the anonymous reviewers for helpful comments. This material is based upon work supported by the Air Force Office of  Scientific Research (AFOSR) under award No. 016974-002.



\bibliographystyle{IEEEtran}
\bibliography{IEEEabrv,mybib}
%

%
%

\end{document}